\newtheorem{theorem}{Theorem}[section]
\newtheorem{proposition}[theorem]{Proposition}
\newtheorem{assumption}[theorem]{Assumption}
\def\bx{{\bf x}}
\def\bb{{\bf b}}
\def\bc{{\bf c}}
\def\beq{\begin{equation}}
\def\eeq{\end{equation}}
\title{{A Scheduling Model of Battery-powered Embedded System}}
\author{ Zhenwu Shi\\
      School of Electrical and Computer Engineering \\
      Georgia Institute of Technology\\
      Atlanta, GA, 30332 \\
      Email: \{zwshi\}@gatech.edu}
\begin{document}
\maketitle

\begin{abstract}
Fundamental theory on battery-powered cyber-physical systems (CPS) calls for dynamic models that are able to describe and predict the status of processors and batteries at any given time. We believe that the idealized system of single processor powered by single battery (SPSB) can be viewed as a generic case for the modeling effort.
This paper introduces a dynamic model for
multiple aperiodic tasks on a SPSB system under a scheduling  algorithm that resembles the rate monotonic scheduling (RMS) within finite time windows.
The model contains two major modules.
The first module is an online battery capacity model based on the Rakhmatov-Vrudhula-Wallach (RVW) model. This module provides predictions of remaining battery capacity based on the knowledge of the battery discharging current. The second module is a dynamical scheduling model that can predict the scheduled behavior of tasks within any finite time window, without the need to store all  past information about each  task before the starting time of the finite time window. The module provides a complete analytical description of the relationship among tasks and it delineates all possible modes of the processor utilization as square-wave functions of time. The two modules i.e. the scheduling model and the battery model are integrated to obtain a
hybrid scheduling model that describes the dynamic behaviors of the SPSB system. Our effort may have demonstrated that through dynamic modeling, different components of CPS may be integrated under a unified theoretical framework centered around hybrid systems theory.
\end{abstract}

\section{Introduction}
Modern society is characterized by the pervasive applications of
embedded computing systems powered by batteries. Use of batteries
endows computing systems with mobility that is appreciated by
consumers, industries and governments. Cell phones, portable
music/video players, and unmanned airplanes flying over a battle
zone are among the most noticeable applications. Computing systems
supported by batteries are usually designed following specialized
guidelines to reduce power consumption, achieving lower maintenance
and longer operation time.

Despite of the great achievements in battery-powered computing,
better theoretical understandings of the interactions between
computing systems and batteries will be appreciated by applications
where power is very limited or battery life is a dominant constraint
on systems design. One such application is in underwater robotics,
where missions may last more than a year. Although appear to be
simple, batteries are complex physical-chemical systems.  Hence
fundamental research in battery powered embedded computing systems
may be viewed as part of the evolving cyber-physical systems (CPS)
theory (c.f. recent publications
\cite{HuLemmon06,XiaSun06,seto01,AP12,Kottenstette08,ZWS_RTSS08})
that aims to develop novel foundations to understand complex
physical systems controlled by modern computers.

The complexity of battery discharging behaviors is noticed in
literature e.g. \cite{Chia01, Rao03}. Battery modeling aims to
simulate these behaviors by computational models \cite{CPMT2012, MTT2010}. To
support theoretical cyber-physical systems design, a class of
analytical models are desired. This is because comparing to physical
\cite{Dolye93}, empirical \cite{Linden01}, and circuit based
\cite{Chen06, EPEPS2011} models, analytical models are theoretically tractable
while providing sufficient accuracy.

The interactions between batteries and computing systems have been
studied by some researchers. Power consumption of computing devices
is proportional to $V^2_{\rm DD}$ \cite{Rabaey96}, hence can be
reduced by dynamic voltage scaling (DVS). Yao et al. \cite{Yao95}
propose one of the first DVS-aware scheduling algorithms.  Rowe et.
al. \cite{Rajkumar08} proposed a scheduling method to combine idle
period of tasks so that the processor can be put to sleep to save
power. Some recent results on power management in sensor network
applications are presented by Ren et. al. \cite{Krogh05}. In these
works the behaviors of batteries are  simplified as ideal voltage
sources whose life only depends on the average discharge current.
The approach taken by Jiong et. al. \cite{JiongJha07a, JiongJha07b}
uses an empirical battery model and  assumes known battery discharge
profiles for computing tasks.

We believe that a dynamic model needs to be established that is able
to predict battery capacity based on the discharge current
determined by a feedback control law or feedback scheduling
algorithm. The magnitude and pulse width of such discharge current
can not be determined beforehand. Therefore, in previous work
\cite{ZSW_WCPS09}, we introduce a dynamic battery model that
describes the variations of the  capacity of a battery under time
varying discharge current. This model is input-output equivalent to
the Rakhmatov-Vrudhula-Wallach (RVW) model proposed  by Rakhmatov
et. al. \cite{Rakhmatov03} which has been  shown to agree with
experimental results and has demonstrated high accuracy in battery
capacity prediction and battery life estimation. Besides the accuracy, the RVW model is also simple to use. It can characterize any general battery with only two parameters $\alpha$ and $\beta$, which can be estimated by a experimental method introduced in \cite{Rakhmatov03}. However, the
application of the RVW model requires that the discharge current, as
a function of time, is known. Our dynamic model allows battery
capacity prediction for feedback control laws and online scheduling
algorithms.

This paper is inspired by the need  to completely describe the possible battery
discharge profiles for a processor that is running a finite number of aperiodic
real-time tasks. Knowing such a discharge profile allows us to
predict remaining battery capacity at any given moment without seeking the help from
simulation programs. We found that
the problem can only be solved by establishing a dynamic model that
integrates both battery models and task scheduling models. However, we are
unable to find existing models from the literature that fits in such requirements.

Therefore, we start our modeling effort by considering a single processor single battery (SPSB) model.
 This model may
appear naive. However, we find it is very challenging to establish
a complete analytical description for the behaviors of multiple tasks. Readers in doubt about this
difficulty may consider giving an explicit mathematical formula for
the processor utilization plots generated by real-time systems
simulation tools such as the TrueTime \cite{Cervin03}. Such
difficulty has been mostly avoided by the works on schedulability
\cite{LiuLayland73,Lehoczky89,Bini03} because only the worst case
scenario, which are the critical time instances, are considered.

With the help from hybrid systems theory evolving in the control
community \cite{Antsaklis00,Branicky98}, we obtain a class of hybrid
dynamical systems models that are able to describe the battery
discharge profile when multiple independent aperiodic tasks are
scheduled under a fixed priority scheduling algorithm on a single processor. When
integrated with the dynamic battery model, theoretical predictions
of battery capacity for an SPSB system can be produced online at any
given time. Comparing to the literature reviewed, we believe our
contribution is novel and may provide foundation  for
developing novel battery aware scheduling for cyber-physical systems.

The paper is organized as follows. Section \ref{formulation}
describes the  SPSB system we investigate. Section
\ref{battery} introduces a dynamic battery model that is used to
determine the remaining battery capacity. Section \ref{Model}
derives the dynamic models for two aperiodic tasks on a single
processor. The dynamic model also enables us to find analytic
descriptions of the processor utilization waveforms in Section
\ref{CPU}. Section \ref{multiple} extend the conclusion of two tasks to multiple tasks. The models are verified and demonstrated by simulations
in Section \ref{simulation}. Conclusions are provided in Section
\ref{con}.

\section{\label{battery}A Dynamic Battery Model}
In this section, we briefly review the dynamic battery model established in our recent work \cite{ZSW_WCPS09} as a workshop paper.
We model the discharge dynamics within one discharge cycle of
(possibly rechargeable) batteries. The discharge cycle starts when a battery is fully charged and ends when the battery can not support the demand for discharge current; no recharging is allowed during the cycle.
According to battery and VLSI design literature e.g. \cite{Chia01, Rao03},
the discharge current is supported by the change of concentration of electrolytes near the electrodes of a battery.  When the concentration {\it at the electrodes} drops below a certain threshold, a battery fails to support discharge current, causing failure to devices it supports. At this moment, the discharge cycle has to stop and the battery needs to be
 recharged or replaced. Note that there may be a significant amount of electrolytes left when a discharge cycle ends.
When the battery is discharged under a pulsed discharge current, during the idle time when current is interrupted, the diffusion process increases electrolyte concentration at the electrodes. This produces the {\it recovery effect} that makes the battery appears to have regained portions of its capacity.

The Rakhmatov-Vrudhula-Wallach (RVW) model in \cite{Rakhmatov03} is derived from solving the diffusion equations governing the electrolytes motion within a battery. The model captures the recovery effect effectively. As mentioned in the introduction, in order to use the RVW model for CPS design,
 we derived a dynamic model that is input-output equivalent to the RVW model. The key improvement is to introduce a set of state variables, denoted by $\bx= [x_0, x_1,..., x_m]$ where $m$ is chosen according to accuracy requirement. In most cases, $m\le 10$ is accurate enough.
Then the dynamic model is given by a linear system:
\begin{eqnarray}
        \dot \bx &=& A \bx + \bb i(t) \cr
        y & =& \bc' \bx.
\end{eqnarray}
 Here, the matrix $A$ is a diagonal matrix i.e. $A={\rm diag}\{0, -\lambda_1,...,-\lambda_m\}$ where $\lambda_j=\beta*j^2$
  for $j=1,2,...,m$. The input $i(t)>0$ is the discharge current. The vector $\bb=\frac{1}{\alpha}[1,2,2,...,2]'$ and the vector $\bc=[1,1,1,...,1]'$. As mentioned in the introduction, $\alpha$ and $\beta$ characterize a battery and they can be estimated by the experimental method introduced in \cite{Rakhmatov03}. The output variable $y$ measures the total capacity loss for a battery. It contains two parts, the permanent capacity loss represented by
  $x_0$, and the temporary capacity loss measured by $x_1, x_2, ..., x_m$.  At the
 beginning of the discharge cycle the initial condition for $\bx$ is the zero vector, hence $y=0$. When $y=1$, the discharge cycle ends.
 The first state variable $x_0$ is the integration of the discharge current $i(t)$, hence is the {\it effective} discharge delivered to the circuits outside the battery. One can see that the actual discharge $y$ contains more than $x_0$. The states $x_1, x_2,..., x_m$ are temporary capacity losses that are always positive since $i(t)\ge 0$.  Under an impulsive current, when $i(t)=0$, $x_j ( 1\le j\le m)$ decreases because $-\lambda_j<0$. This captures the recovery effect.
More details about this model and results regarding optimal discharge profiles are
presented in our work \cite{ZSW_WCPS09}.

\section{\label{formulation}Scheduled Behaviors of Aperiodic Tasks}
In this paper, we consider the case where aperiodic
tasks with hard deadlines scheduled on a SPSB system. Tasks are scheduled using the following scheduling
algorithm: the tasks with shorter
request intervals are assigned higher priorities, and the higher
priority tasks can preempt the lower priority tasks. {\bf We are
interested in the scheduled behaviors of $\tau_1, ..., \tau_N$ within
a finite time window, which starts from
$t_0$ with length $L$}. $t_0$ can be any time instant.
Since the request intervals of aperiodic tasks are not
constant, the priority assignments among aperiodic tasks may change during the finite time window. To avoid such cases, we enforce the following assumptions on the length of time window $L$.
\begin{assumption}
\label{assumption:1} Within $[t_0, t_0+L]$, the maximum
request interval of $\tau_{i}$ is smaller than the minimal
request interval of $\tau_{i+1}$, where $i$ is selected such that the priority of $\tau_i$ is always higher than the priority of $\tau_{i+1}$.
\end{assumption}

When $t_0$ changes, the length of time window and the priority assignments may change as well.

\begin{assumption}
\label{assumption:2}
We assume that for any time instant $t$,  there exists $t_0$ and $L$ such that Assumption \ref{assumption:1} is satisfied.
\end{assumption}

Assumption \ref{assumption:1} guarantees coherence in priority assignments during the finite time window starting from $t_0$. Assumption \ref{assumption:2} guarantees that there is no gap between the proper time windows. It is of course trivial to see that periodic tasks satisfy these assumptions within $[t_0, t_0+L]$. In fact, the scheduling algorithm becomes
the Rate Monotonic Scheduling (RMS) when the time window $L$ goes to infinity. Our main goal here is to model the behaviors of the tasks, not proposing a new scheduling algorithm.

Since such an modeling effort has not been performed in previous literature, we need to introduce new notations that depict the task behaviors. Suppose
$\tau_{\alpha}$ is an independent aperiodic task. We use $t_{\alpha}(n)$ to denote the request time of the $n$-th instance of $\tau_{\alpha}$; $C_{\alpha}(n)$ to denote the computing time of the
$n$-th instance of $\tau_{\alpha}$; and $T_{\alpha}(n)$ to denote the request interval of the $n$-th instance of $\tau_{\alpha}$, which is defined to be the interval between the request time of the $n$-th instance of $\tau_{\alpha}$, and the request time of the $(n+1)$-th instance of $\tau_{\alpha}$, i.e. we have
$T_{\alpha}(n)=t_{\alpha}(n+1)-t_{\alpha}(n)$.

Regardless of the choice of $t_0$, the scheduled behaviors of $\tau_1, ..., \tau_N$ within $[t_0, t_0+L]$ is affected by both the tasks requested within $[t_0, t_0+L]$ and the tasks requested before $t_0$. To predict the scheduled behaviors within $[t_0, t_0+L]$ by real-time simulation tools such as Truetime, we have to know all task information within $[0, t_0+L]$ and simulate from time $0$. However, keeping all task information is costly in the real application and simulating from the beginning is time inefficient. We discover that this problem can be solved by using a dynamical model, instead of storing a large amount of data, we just need to update the value of some state variables. Furthermore, the dynamic model is able to determine the state of
the processor at any given time within $[t_0, t_0+L]$.

We develop a recursive algorithm to determine the state of the processor for multiple aperiodic tasks. Our goal is to find out when the processor is occupied within $[t_0,t_0+L]$. The result is a function of time
$\Phi: t \to \{0,1\}$ where for $t\in [t_0, t_0+L]$,
$\Phi(t)=0$ if the processor is free, and $\Phi(t)=1$ if the
processor is busy.

Furthermore, we assume that when $\Phi(t)=1$, a constant current
will be drawn from the battery and when $\Phi(t)=0$, the current
vanishes. For simplicity, we ignore the possible power difference
among tasks and the transients in the battery discharge
current. If $\Phi(t)$ is known, we are able to
describe the current $i(t)$ drawn from the battery by the processor. And then, using the dynamic battery model,
we are able to predict the remaining battery capacity.

\section{\label{Model}A Dynamic Scheduling Model for Two Tasks}
In this section, we establish a dynamic model for the scheduled
behaviors of two independent aperiodic tasks on a single processor within
$[t_0, t_0+L]$. Later in this paper, we will develop a recursive algorithm to extend the model to
multiple tasks. Consider two tasks
$\tau_{\alpha}$ and $\tau_{\beta}$, where $\alpha=1$ and $\beta=2$, scheduled under the scheduling algorithm introduced in Section \ref{formulation}. Suppose Assumptions \ref{assumption:1} and \ref{assumption:2} are satisfied, we have the following claim:

{\bf Claim 1:\;}$\tau_{\alpha}$ is assigned higher priority than $\tau_{\beta}$ within $[t_0, t_0+L]$.

There is at most one instance of $\tau_{\beta}$ requested within each request interval of $\tau_{\alpha}$.

We determine a set of "state variables" that completely determine
the scheduled behavior of $\tau_{\alpha}$ and $\tau_{\beta}$ at any
given time within $[t_0, t_0+L]$. Then
transition rules between these states from the current time to all
future times are described by a set of nonlinear difference
equations.The work presented in this section centered around these
two themes---states and transition rules.

\subsection{Exploring Task Relationship}
In this subsection, we model the dynamics for the request time of
$\tau_{\alpha}$ and $\tau_{\beta}$, and explore task relationship between $\tau_{\alpha}$ and $\tau_{\beta}$.

We first select the request time of $\tau_{\alpha}$ and
$\tau_{\beta}$ as two state variables. Within the finite time window,
$t_{\alpha}(n)$ and $t_{\beta}(m)$ satisfies the following equations
\begin{equation}
\label{eq:t} \left\{
\begin{array}{l}
t_{\alpha}(n+1)=t_{\alpha}(n)+T_{\alpha}(n) \\
t_{\beta}(m+1)=t_{\beta}(m)+T_{\beta}(m)
\end{array}
\right.
\end{equation}
where $n$ and $m$ are the index numbers for the instances of $\tau_{\alpha}$ and $\tau_{\beta}$ requested within the finite time window. For example, if $\tau_{\alpha}$ has a fixed period, $n$ ranges from $1$ to $\lfloor L/T_1\rfloor$ where $T_1$ is the period for $\tau_{\alpha}$; and if $\tau_{\beta}$ has a fixed period $T_2$, $m$ ranges from $1$ to $\lfloor L/T_2\rfloor$.

Next, we explore task relationship between $\tau_{\alpha}$ and $\tau_{\beta}$ by introducing another two state variables $y(n)$ and $z(n)$
. $y(n)$ helps describe the request time of the first instance of $\tau_{\beta}$ requested after the $n$-th instance of $\tau_{\alpha}$, which is denoted by $t_{\beta}(y(n))$. The value of $y(n)$ indicates the index numbers for the instances of $\tau_{\beta}$ requested within the finite time window. If $\tau_{\beta}$ has fixed period $T_2$, then $y(n)$ ranges from $1$ to $\lfloor L/T_2\rfloor$. $z(n)$ describes the phase difference between the request time of the
$n$-th instance of $\tau_{\alpha}$, and the request time of the $y(n)$-th instance of $\tau_{\beta}$, i.e.
\begin{equation}
\label{eq:zdef} z(n)=t_{\beta}(y(n))-t_{\alpha}(n).
\end{equation}

To determine the transition rules for $y(n)$ and $z(n)$ from $n$ to $n+1$, we
need to consider two cases.

{\bf Case 1} This case happens when the phase difference satisfies
\begin{equation}
0 \le z(n)<T_{\alpha}(n)
\end{equation}

In this case, the $y(n)$-th instance of $\tau_{\beta}$ is requested within
$[t_{\alpha}(n), t_{\alpha}(n+1)]$. Thus, the $y(n)$-th instance of
$\tau_{\beta}$ is the last instance of $\tau_{\beta}$ requested before
$t_{\alpha}(n+1)$. On the other hand, since $y(n+1)$ is defined to
be the index number for the first instance of $\tau_{\beta}$ requested after
$t_{\alpha}(n+1)$, the $\{y(n+1)-1\}$-th instance of $\tau_{\beta}$ is also the last instance of
$\tau_{\beta}$ requested before $t_{\alpha}(n+1)$. Therefore, we have
\begin{equation}
\label{eq:y1} y(n+1)-1=y(n)
\end{equation}

In this case, $z(n)$ will update in the following ways
\begin{eqnarray*}
z(n+1)&=&t_{\beta}(y(n+1))-t_{\alpha}(n+1) \\
&=&t_{\beta}(y(n)+1)-t_{\alpha}(n+1) \\
&=&z(n)+T_{\beta}(y(n))-T_{\alpha}(n)
\end{eqnarray*}
where (\ref{eq:t}), (\ref{eq:zdef}) and (\ref{eq:y1}) have been
applied. This implies that the phase difference between the two
tasks has increased by $T_{\beta}(y(n))-T_{\alpha}(n)$.

{\bf Case 2} This case happens when the phase difference satisfies
\begin{equation*}
z(n)\ge T_{\alpha}(n)
\end{equation*}

In this case, the $y(n)$-th instance of $\tau_{\beta}$ is requested after
$t_{\alpha}(n+1)$. Thus, the $y(n)$-th instance of $\tau_{\beta}$ is the first instance of
$\tau_{\beta}$ requested after $t_{\alpha}(n+1)$. On the other hand,
the $y(n+1)$-th instance of $\tau_{\beta}$ is also defined to be the first instance of
$\tau_{\beta}$ requested after $t_{\alpha}(n+1)$. Therefore, we have
\begin{equation}
\label{eq:y2} y(n+1)=y(n)
\end{equation}

In this case, $z(n)$ will update in the following ways
\begin{eqnarray*}
z(n+1)&=&t_{\beta}(y(n+1))-t_{\alpha}(n+1) \\
&=&t_{\beta}(y(n))-t_{\alpha}(n+1) \\
&=&z(n)-T_{\alpha}(n).
\end{eqnarray*}
It can be seen that the phase difference decreases by
$T_{\alpha}(n)$.

In summary, $y(n)$ and $z(n)$ are described by the following
equations:
\begin{equation}
\label{eq:y} y(n+1)=\left\{
\begin{array}{ll}
y(n)+1 &  {\rm if} \;\; 0\leq z(n) < T_{\alpha}(n) \\
y(n)   &  {\rm if}  \;\;z(n) \ge T_{\alpha}(n)
\end{array}
\right.
\end{equation}

\begin{equation}
\label{eq:z} z(n+1)=\left\{
\begin{array}{l}
z(n)+T_{\beta}(y(n))-T_{\alpha}(n) \\
 \{{\rm if} \;\; 0 \le z(n)<T_{\alpha}(n)\} \\
z(n)-T_{\alpha}(n)  \\
 \{{\rm if} \;\; z(n)\ge T_{\alpha}(n)\}
\end{array}
\right.
\end{equation}

\subsection{Modeling the residue}
The state of processor after a given time point will be affected by the instances of $\tau_{\alpha}$ and $\tau_{\beta}$ requested before the given time point. To track the status of $\tau_{\alpha}$ and $\tau_{\beta}$, we introduce a term
called the {\it residue}. The residue $P(n)$
is defined to be the amount of time that is needed after the time instant
$t_{\alpha}(n)$ to finish computing the last instance of
$\tau_{\beta}$ requested before $t_{\alpha}(n)$. More specifically, since the $y(n)$-th instance of $\tau_{\beta}$ is defined to
be the first instance of $\tau_{\beta}$ requested after $t_{\alpha}(n)$, the
$\{y(n)-1\}$-th instance of $\tau_{\beta}$ is the last instance of $\tau_{\beta}$ requested
before $t_{\alpha}(n)$. As a result, $P(n)$ describes the remaining
time needed to finish computing the $\{y(n)-1\}$-th instance of $\tau_{\beta}$
after $t_{\alpha}(n).$  This $P(n)$ will be
another state variable, in addition to $t_{\alpha}(n)$,
$t_{\beta}(n)$, $y(n)$ and $z(n)$. We also introduce an auxiliary
variable $R(n)$ that describes how much time within the interval of
$[t_{\alpha}(n), t_{\alpha}(n+1)]$ can be allocated to compute the
last instance of $\tau_{\beta}$ requested before $t_{\alpha}(n)$, i.e. the $\{y(n)-1\}$-th instance of $\tau_{\beta}$,  .

To determine the value of $R(n)$, we need to
consider two possibilities.
\begin{enumerate}
\item If
\begin{equation}
0<P(n) \leq T_{\alpha}(n)-C_{\alpha}(n),
\end{equation}
which implies that the execution of the $\{y(n)-1\}$-th instance of
$\tau_{\beta}$ is finished within
$[t_{\alpha}(n),t_{\alpha}(n+1)]$, then $R(n)=P(n)$.

\item If
\begin{equation}
P(n) > T_{\alpha}(n)-C_{\alpha}(n),
\end{equation}
which implies that the execution of the $\{y(n)-1\}$-th instance of
$\tau_{\beta}$ cannot be finished before $t_{\alpha}(n+1)$, then the
idle time of processor within the interval of $[t_{\alpha}(n),
t_{\alpha}(n+1)]$ will be allocated to compute $\tau_{\beta}$.
Thus, $R(n)=T_{\alpha}(n)-C_{\alpha}(n)$.
\end{enumerate}

As a summary, we have
\begin{equation}
\label{eq:R} R(n)={\rm min}\{T_{\alpha}(n)-C_{\alpha}(n), P(n)\}.
\end{equation}
We can see that $R(n)$ solely depends on $P(n)$, but $R(n)$ will be
more convenient to use to derive processor utilization waveforms in
Section \ref{CPU}.

Next, we determine the transition rules for $P(n)$. There are two
cases to consider.

{\bf Case 1}: the phase variable $z(n)$ satisfies
\begin{equation}
0\leq z(n)< T_{\alpha}(n)
\end{equation}
which implies that the $y(n)$-th instance of $\tau_{\beta}$ is requested within $[t_{\alpha}(n), t_{\alpha}(n+1)]$, as illustrated by
Fig. \ref{fig:R1}.

\begin{figure}[h]
\centering
\epsfig{file=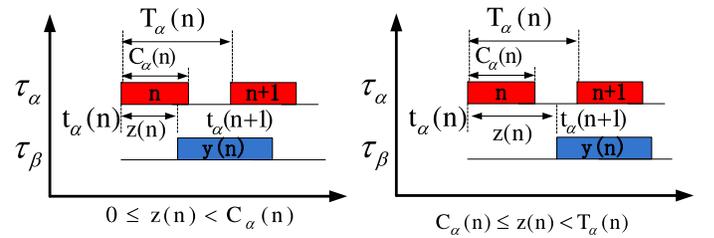, width=0.5\textwidth}
\caption{Case 1. The $y(n)$-th instance of $\tau_{\beta}$ is requested within
$[t_{\alpha}(n),t_{\alpha}(n+1)]$}.  \label{fig:R1}
\end{figure}

In Fig. \ref{fig:R1}, we can see that if $0 \leq z(n)<
C_{\alpha}(n)$, then the execution of the $y(n)$-th instance of $\tau_{\beta}$
will start from $t_{\alpha}(n)+C_{\alpha}(n)$ since the execution of
the $n$-th instance of $\tau_{\alpha}$ will postpone the $y(n)$-th instance of
$\tau_{\beta}$. If $C_{\alpha}(n)\le z(n)<T_{\alpha}(n)$, then the
execution of the $y(n)$-th instance of $\tau_{\beta}$ will start from
$t_{\alpha}(n)+z(n)$. Thus, the execution of the $y(n)$-th instance of $\tau_{\beta}$
will start from
\begin{equation*}
{\rm max} \{t_{\alpha}(n)+C_{\alpha}(n), t_{\alpha}(n)+z(n)\}.
\end{equation*}

The time allocated for the execution of the $y(n)$-th instance of $\tau_{\beta}$
within $[t_{\alpha}(n), t_{\alpha}(n+1)]$ is,
\begin{eqnarray}
&&t_{\alpha}(n+1)-{\rm max} \{t_{\alpha}(n)+C_{\alpha}(n),
t_{\alpha}(n)+z(n)\} \cr &=&T_{\alpha}(n)-{\rm
max}\{C_{\alpha}(n),z(n)\}
\end{eqnarray}
Therefore, the residue for the $y(n)$-th instance of $\tau_{\beta}$ to be
computed after $t_{\alpha}(n+1)$ is
\begin{equation}
C_{\beta}(y(n))-(T_{\alpha}(n)-{\rm max}\{C_{\alpha}(n),z(n)\}).
\end{equation}
 The value of $P(n+1)$ can not be negative. Therefore
\begin{equation}
\label{eq:P1RMS} P(n+1)={\rm
max}\{0,C_{\beta}(y(n))-(T_{\alpha}(n)-{\rm
max}\{C_{\alpha}(n),z(n)\})\}.
\end{equation}

{\bf Case 2}: When the phase variable $z(n)$ satisfies
\begin{equation}
z(n) \ge T_{\alpha}(n)
\end{equation}
which implies that the $y(n)$-th instance of $\tau_{\beta}$ is requested after
$t_{\alpha}(n+1)$, the $\{y(n)-1\}$-th instance of $\tau_{\beta}$ is requested
before $t_{\alpha}(n)$, and no instance of $\tau_{\beta}$ is requested within $[t_{\alpha}(n), t_{\alpha}(n+1)]$, as illustrated
by Fig. \ref{fig:R2} .
\begin{figure}[h]
\centering
\epsfig{file=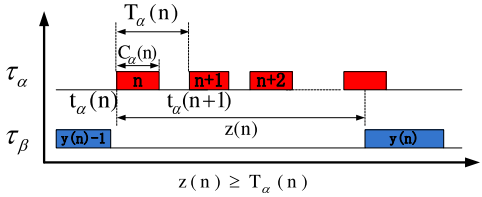, width=0.35\textwidth}
\caption{Case 2. The $y(n)$-th instance of $\tau_{\beta}$ is requested
after $t_{\alpha}(n+1)$} \label{fig:R2}
\end{figure}

In this case, the $\{y(n)-1\}$-th instance of $\tau_{\beta}$ is the last instance of
$\tau_{\beta}$ requested before $t_{\alpha}(n)$. By definition, we
know that $P(n)$ describes the remaining time to compute the
$\{y(n)-1\}$-th instance of $\tau_{\beta}$ after $t_{\alpha}(n)$. In addition,
the $\{y(n)-1\}$-th instance of $\tau_{\beta}$ is also the last instance of $\tau_{\beta}$
requested before $t_{\alpha}(n+1)$ and thus $P(n+1)$ describes the
remaining time to compute the $\{y(n)-1\}$-th instance of $\tau_{\beta}$ after
$t_{\alpha}(n+1)$.

From Fig. {\ref{fig:R2}}, we can see that the time that can be
allocated to compute the $\{y(n)-1\}$-th instance of $\tau_{\beta}$ within
$[t_{\alpha}(n),t_{\alpha}(n+1)]$ is $T_{\alpha}(n)-C_{\alpha}(n)$.
Therefore, the remaining time for the $\{y(n)-1\}$-th instance of $\tau_{\beta}$
to be computed after $t_{\alpha}(n)$ is
\begin{equation*}
P(n)-(T_{\alpha}(n)-C_{\alpha}(n)).
\end{equation*}
Thus the value of $P(n+1)$ is given by the following:
\begin{equation}
\label{eq:P2} P(n+1)={\rm
max}\{0,P(n)-(T_{\alpha}(n)-C_{\alpha}(n))\}.
\end{equation}

According to (\ref{eq:P1RMS}) and (\ref{eq:P2}), we can draw a
conclusion on the transition rules for the state variable $P(n)$ from $n$ to $n+1$.
\begin{equation}
\label{eq:PRMS}
P(n+1)=\left\{
\begin{array}{l}
{\rm max}\{0,C_{\beta}(y(n))-(T_{\alpha}(n)\\
\;\;\;\;\;\;\;\;\;\;\;\;\;\;\;\;\;\;\;\;\;\;\;\;\;\;\;\;\;\;-{\rm max}\{C_{\alpha}(n),z(n)\})\} \\
\;\;\;\;\;\;\;\;\; \mbox{ if } \;\;  0\leq z(n)<T_{\alpha}(n) \;\; \\
{\rm max}\{0,P(n)-(T_{\alpha}(n)-C_{\alpha}(n))\} \\
\;\;\;\;\;\;\;\;\; \mbox{ if } \;\;  z(n)\ge T_{\alpha}(n)\;\;
\end{array}
\right.
\end{equation}


%

\subsection{Initialization of State Variables}

In this subsection, we will discuss the initialization of the state variables within $[t_0,t_0+L]$.

Suppose $t_{\alpha}$ and $t_{\beta}$ is the request time of the first instance of $\tau_{\alpha}$ and $\tau_{\beta}$ requested within $[t_0,t_0+L]$ , $P_{\alpha}$ and $P_{\beta}$ is the residue for the last instance of $\tau_\alpha$ and $\tau_\beta$ requested before $t_0$ to be computed after $t_0$. $P_{\alpha}$ and $P_{\beta}$ need to be retrieved from the memory of the system.

We always assume that $t_{\alpha}(1)=t_0$. If $t_0\neq t_{\alpha}$, we will model a new instance of $\tau_{\alpha}$ as the first instance of $\tau_{\alpha}$ requested within $[t_0,t_0+L]$. The new instance has the following task characteristics:
\begin{equation}
t_{\alpha}(1)=t_0 \;\;\; C_{\alpha}(1)=P_{\alpha} \;\;\;
T_{\alpha}(1)=t_{\alpha}(1)-t_0.
\end{equation}

According to the definition of $t_{\beta}$, $P$, $y$ and $z$, we have
\begin{eqnarray*}
t_{\beta}(1)=t_{\beta}\;\; y(1)=1 \;\; z(1)=t_{\beta}(1)-t_{\alpha}(1) \;\; P(1)=P_{\beta}
\end{eqnarray*}

As we can see, the residue $P$ serves to connect the state of processor before $t_0$ with the state of processor within $[t_0, t_0+L]$. Our model has the advantage over the classical scheduling analysis because instead of storing for all task behaviors made before $t_0$, we only need to record the residue of each task and to start the model. Therefore, we can analysis the scheduled behaviors of tasks starting from any time and within any finite time window as long as Assumptions \ref{assumption:1}  and \ref{assumption:2} are satisfied.

\section{\label{CPU}Processor Utilization Waveform of Two Tasks}
Suppose $\tau_{\alpha}$ and $\tau_{\beta}$ are schedulable. With the
states and transition rules for the scheduling algorithm
determined, we compute a function $\Phi(t)$ that describes the states of the
processor as either free ($\Phi(t)=0$) or busy ($\Phi(t)=1$). To achieve this goal, we study the states of processor within each request interval of $\tau_{\alpha}$, i.e. $[t_{\alpha}(n), t_{\alpha}(n+1)]$. The result is a function of time $\Phi_n$: $t \to \{0,1\}$ where for $t\in [t_{\alpha}(n), t_{\alpha}(n+1)]$,
$\Phi_n(t)=0$ if the processor is free, and $\Phi_n(t)=1$ if the
processor is busy. If $\Phi_n(t)$ is known for all $n$, we are able to describe $\Phi(t)$ within $[t_0, t_0+L]$. Note that we ignore the scheduling transients that may exist for the
processor, then the graph of such a function resembles a square
wave. Although the space of possible square wave functions on a
compact interval is infinite dimensional, we find that for the scheduling policy introduced in Section \ref{formulation}, the waveforms only have two modes.

{\bf Case 1} This describes the mode when the phase variable $z(n)$
satisfies
\begin{equation}
\label{eq:case1RMS} 0 \le z(n)< T_{\alpha}(n),
\end{equation}
which implies that the $y(n)$-th instance of $\tau_{\beta}$ is requested within
$[t_{\alpha}(n),t_{\alpha}(n)+T_{\alpha}(n)]$.

There are two subcases.

{\bf Case 1.1:} If the phase variable $z(n)$ satisfies
\begin{equation}
\label{eq:case11RMS} 0 \leq z(n)<C_{\alpha}(n),
\end{equation}
which implies that the $y(n)$-th instance of $\tau_{\beta}$ is requested while the
processor is computing $\tau_{\alpha}$, i.e. $t_{\beta}(y(n)) \in
[t_{\alpha}(n),t_{\alpha}(n)+C_{\alpha}(n)]$, as illustrated by the
left picture in Fig. \ref{fig:R1}.

It can be observed in Fig. \ref{fig:R1} that
\begin{equation}
\label{eq:phi11RMS} \Phi_n(t)=1 \;\; \mbox{ if } \;\; t \in
\left[t_{\alpha}(n),t_{\alpha}(n)+C_{\alpha}(n)\right].
\end{equation}
The $\{y(n)-1\}$-th instance of $\tau_{\beta}$ must have been finished before
$t_{\alpha}(n)$ as a result of schedulability requirements. Thus,
$R(n)=0$. In order to derive a unified formula for $\Phi_n$ later,
we notice that (\ref{eq:phi11RMS}) can be rewritten as
\begin{equation}
\label{eq:phi12RMS} \Phi_n(t)=1 \;\; \mbox{ if } \;\; t \in
\left[t_{\alpha}(n),t_{\alpha}(n)+C_{\alpha}(n)+R(n)\right].
\end{equation}

The execution of the $y(n)$-th instance of $\tau_{\beta}$ will start
from $t_{\alpha}(n)+C_{\alpha}(n)$. The finishing time of the
$y(n)$-th instance of $\tau_{\beta}$ within $[t_{\alpha}(n), t_{\alpha}(n+1)]$
have two possibilities:
\begin{enumerate}
\item If
\begin{equation}
C_{\beta}(y(n))<T_{\alpha}(n)-C_{\alpha}(n)
\end{equation}
which implies that the execution of $\tau_{\beta}$ will be finished
before $t_{\alpha}(n+1)$, then
\begin{equation}
\begin{array}{c}
\Phi_n(t)=1 \;\; \mbox{ if } \\ t \in
\left[t_{\alpha}(n)+C_{\alpha}(n),t_{\alpha}(n)+C_{\alpha}(n)+C_{\beta}(y(n))\right].
\end{array}
\end{equation}

\item If
\begin{equation}
C_{\beta}(y(n)) \geq T_{\alpha}(n)-C_{\alpha}(n)
\end{equation}
which implies that the execution of $\tau_{\beta}$ will occupy the
idle time of processor within the interval
$[t_{\alpha}(n),t_{\alpha}(n+1)]$, then
\begin{equation}
\begin{array}{c}
\Phi_n(t)=1 \;\; \mbox{ if } \\ t \in
\left[t_{\alpha}(n)+C_{\alpha}(n),t_{\alpha}(n)+C_{\alpha}(n)+(T_{\alpha}(n)-C_{\alpha}(n))\right].
\end{array}
\end{equation}
\end{enumerate}

As a summary, the execution time of the $y(n)$-th instance of $\tau_{\beta}$
during $[t_{\alpha}(n), t_{\alpha}(n+1)]$ is
\begin{equation}
\label{eq:phi13RMS}
\begin{array}{c}
\Phi_n(t)=1 \;\; \mbox{ if } \\t \in
\left[t_{\alpha}(n)+C_{\alpha}(n),t_{\alpha}(n)+C_{\alpha}(n)+\right.\\
\;\;\;\; \left.{\rm
min}\{C_{\beta}(n),T_{\alpha}(n)-C_{\alpha}(n)\}\right].
\end{array}
\end{equation}
According to (\ref{eq:phi12RMS}) and (\ref{eq:phi13RMS}), the
processor utilization during $[t_{\alpha}(n), t_{\alpha}(n+1)]$ is
\begin{equation}
\label{eq:execution11RMS}
\begin{array}{c}
 \Phi_n(t)=1 \mbox{ if } \\
t \in \left[t_{\alpha}(n),t_{\alpha}(n)+C_{\alpha}(n)+R(n)\right]
\bigcup \left[t_{\alpha}(n)+C_{\alpha}(n),\right.\\
\;\; \left.t_{\alpha}(n)+C_{\alpha}(n)+{\rm
min}\{C_{\beta}(n),T_{\alpha}(n)-C_{\alpha}(n)\}\right].
\end{array}
\end{equation}

{\bf Case 1.2:} If the phase variable $z(n)$ satisfies
\begin{equation}
\label{eq:case12RMS} z(n) \ge T_{\alpha}(n)
\end{equation}
which implies that the $y(n)$-th instance of $\tau_{\beta}$ is requested within $[t_{\alpha}(n)+C_{\alpha}(n),t_{\alpha}(n+1)]$ , as
shown in the right picture of Fig. \ref{fig:R2}.

The arguments for the $\{y(n)-1\}$th instance of $\tau_{\beta}$ still holds.
Hence
\begin{equation}
\label{eq:phi14RMS} \Phi_n(t)=1 \;\; \mbox{ if } \;\; t \in
\left[t_{\alpha}(n),t_{\alpha}(n)+C_{\alpha}(n)+R(n)\right].
\end{equation}

The execution of the $y(n)$-th instance of $\tau_{\beta}$ will start from
$t_{\alpha}(n)+z(n)$. The finishing time of the $y(n)$-th instance of $\tau_{\beta}$ within
$[t_{\alpha}(n), t_{\alpha}(n+1)]$ have two possibilities. Using the
similar method for (\ref{eq:phi13RMS}), we can show that the
execution time of the $y(n)$-th instance of $\tau_{\beta}$ during
$[t_{\alpha}(n), t_{\alpha}(n+1)]$ is
\begin{equation}
\label{eq:phi15RMS}
\begin{array}{c}
\Phi_n(t)=1 \;\;\; \mbox{ if } \\
t \in \left[t_{\alpha}(n)+z(n), t_{\alpha}(n)+z(n)+\right.\\ \left.
\;\;\;\;\; {\rm min}\{C_{\beta}(y(n)),T_{\alpha}(n)-z(n)\}\right].
\end{array}
\end{equation}
According to (\ref{eq:phi14RMS}) and (\ref{eq:phi15RMS}), we know
that the processor utilization during
$[t_{\alpha}(n),t_{\alpha}(n+1)]$ is
\begin{equation}
\label{eq:execution12RMS}
\begin{array}{c}
\Phi_n(t)=1 \;\; \mbox{ if } \\
t \in \left[t_{\alpha}(n), t_{\alpha}(n)+C_{\alpha}(n)+R(n)\right]
\bigcup \left[t_{\alpha}(n)+z(n),\right.\\
\left. t_{\alpha}(n)+z(n)+{\rm
min}\{C_{\beta}(y(n)),T_{\alpha}(n)-z(n)\}\right].
\end{array}
\end{equation}

As a summary, a conclusion can be drawn from
(\ref{eq:execution11RMS}) and (\ref{eq:execution12RMS}) that during
$[t_{\alpha}(n), t_{\alpha}(n+1)]$, the processor state is
\begin{equation}
\label{eq:execution1RMS}
\begin{array}{c}
\Phi_n(t)=1 \;\; \mbox{ if } \\
t \in \left[t_{\alpha}(n), t_{\alpha}(n)+C_{\alpha}(n)+R(n)\right]
\bigcup \left[t_{\alpha}(n)+\right.\\\left.{\rm max}\{C_{\alpha}(n),
z(n)\},t_{\alpha}(n)+{\rm max}\{C_{\alpha}(n), z(n)\}\right.\\
\left. +{\rm min}\{C_{\beta}(y(n)),T_{\alpha}(n)-{\rm
max}\{C_{\alpha}(n), z(n)\}\}\right].
\end{array}
\end{equation}

{\bf Case 2} This describes the mode when the phase variable $z(n)$
satisfies
\begin{equation}
\label{eq:case2RMS} z(n) \ge T_{\alpha}(n).
\end{equation}
which implies that the $y(n)$-th instance of $\tau_{\beta}$ is requested after
$t_{\alpha}(n+1)$, as shown in Fig. \ref{fig:R2}.

In Fig. \ref{fig:R2}, only the $\{y(n)-1\}$-th instance of $\tau_{\beta}$
and the $n$-th instance of $\tau_{\alpha}$ will affect the processor state within $[t_{\alpha}(n), t_{\alpha}(n+1)]$. In this case, the
$\{y(n)-1\}$-th instance of $\tau_{\beta}$ is the last instance of
$\tau_{\beta}$ requested before $t_{\alpha}(n)$. Thus, the execution
of the $\{y(n)-1\}$-th instance of $\tau_{\beta}$ and the $n$-th instance of $\tau_{\alpha}$
will take up the time
\begin{equation*}
R(n)+C_{\alpha}(n)
\end{equation*}
which implies that the processor state during
$[t_{\alpha}(n),t_{\alpha}(n+1)]$ is
\begin{equation}
\label{eq:execution2RMS} \Phi_n(t)=1 \;\; \mbox{ if } \;\; t \in
\left[t_{\alpha}(n),t_{\alpha}(n)+C_{\alpha}(n)+R(n)\right].
\end{equation}

According to (\ref{eq:execution1RMS}) and (\ref{eq:execution2RMS}),
the processor utilization within each request interval of $\tau_{\alpha}$
are depicted in Fig. \ref{fig:CPU}.

\begin{figure}[htbp]
\epsfig{file=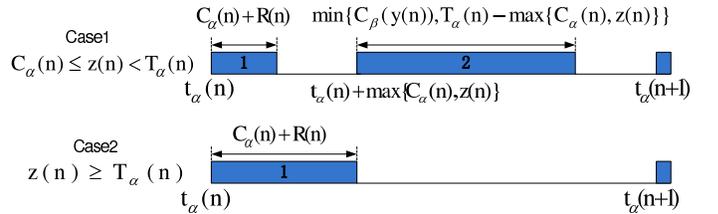,width=0.5\textwidth}
\caption{Processor Utilization Waveform within $[t_{\alpha}(n),
t_{\alpha}(n+1)]$} \label{fig:CPU}
\end{figure}

\section{\label{multiple}Model For Multiple Tasks}
In Section \ref{Model} and \ref{CPU}, we derive the processor
utilization waveform of two tasks from a dynamic scheduling model.
In this section, we develop a recursive algorithm to
extend the modeling effort from two tasks to multiple tasks. Suppose $\tau_1,
..., \tau_N$ are schedulable. The processor utilization waveform of
$\tau_1, ..., \tau_N$ can also be derived recursively in the following
ways

\begin{enumerate}
\item {\bf Initialization}: In the first recursion, we use $\tau_{\alpha}$
to denote $\tau_{1}$ and $\tau_{\beta}$ to denote $\tau_{2}$. Then,
we derive the processor utilization waveform of $\tau_{\alpha}$ and
$\tau_{\beta}$ according to (\ref{eq:execution1RMS}) and
(\ref{eq:execution2RMS}), and model the waveform into a single task
$\tau_{cmb}$.
\item {\bf Iteration}: In the $n$-th iteration ($2\le n \le N-2$), we use $\tau_{\alpha}$
to denote the $\tau_{cmb}$ derived from the $(n-1)$-th iteration and
$\tau_{\beta}$ to denote $\tau_{n+1}$. Then, we model the processor
utilization waveform of $\tau_{\alpha}$ and $\tau_{\beta}$ into a
single task $\tau_{cmb}$.
\item {\bf Result}: In the $\{N-1\}$-th iteration, the processor utilization
waveform of $\tau_{\alpha}$ and $\tau_{\beta}$ is actually the
processor utilization waveform of $\tau_1, ..., \tau_N$
\end{enumerate}

As we can see, the scheduling of multiple tasks are treated
recursively as the scheduling of two tasks. There are two challenges in the recursive algorithm. First, we need to model the
processor utilization waveform as a single task $\tau_{cmb}$; second, we need to
prove that the Assumption \ref{assumption:1} and \ref{assumption:2} are satisfied during each iteration.

First, we show how to characterize $\tau_{cmb}$ as a single schedulable task from the
processor utilization waveform. The processor utilization waveform
within $[t_{\alpha}(n), t_{\alpha}(n+1)]$ have two modes. If $0\le
z(n)<T_{\alpha}(n)$, as illustrated in the upper figure of Fig.
\ref{fig:CPU}, the waveform within $[t_{\alpha}(n),
t_{\alpha}(n+1)]$ is divided into two segments. Each segment can be
viewed as the execution of one instance of $\tau_{cmb}$. The task behaviors of
the first instance is
\begin{equation}
\label{eq:cmb1} \left\{
\begin{array}{l}
{\rm Request \; Interval}: {\rm max}\{C_{\alpha}(n), z(n)\} \\
{\rm Computing \; Time}: C_{\alpha}(n)+R(n)
\end{array}
\right.
\end{equation}
and the task behaviors of the second instance is
\begin{equation}
\label{eq:cmb2} \left\{
\begin{array}{l}
{{\rm Request \; Interval} }: T_{\alpha}(n)-{\rm max}\{C_{\alpha}(n), z(n)\} \\
{\rm Computing \; Time}: \\\;\;\;\;{\rm
min}\{C_{\beta}(y(n)),T_{\alpha}(n)-{\rm max}\{C_{\alpha}(n),
z(n)\}\}
\end{array}
\right.
\end{equation}

On the other hand, if $z(n)\ge T_{\alpha}(n)$, as illustrated in the
lower figure of Fig. \ref{fig:CPU}, the waveforms within
$[t_{\alpha}(n), t_{\alpha}(n+1)]$ can be viewed as the execution of one instance of
$\tau_{cmb}$, with the task behavior being
\begin{equation}
\label{eq:cmb3} \left\{
\begin{array}{l}
{\rm Request \; Interval}: T_{\alpha}(n) \\
{\rm Computing \; Time}: C_{\alpha}(n)+R(n)
\end{array}
\right.
\end{equation}

Next, we use mathematical induction method to prove the
following Proposition.
\begin{proposition}
\label{proposition:1}
In each iteration, for any time instant $t$, there exists $t_0$ and $L$ such that the maximum request interval
of $\tau_{\alpha}$ is smaller than the minimal request interval of $\tau_{\beta}$
within $[t_0, t_0+L]$.
\end{proposition}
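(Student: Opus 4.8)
The plan is to prove the Proposition by induction on the iteration index $j$, where $j=1$ is the initialization step (combining $\tau_1,\tau_2$) and $j=N-1$ is the final iteration (with $\tau_{\beta}=\tau_N$). In iteration $j$ the task $\tau_{\alpha}$ encodes the combined waveform of $\tau_1,\dots,\tau_j$ and $\tau_{\beta}=\tau_{j+1}$. The heart of the argument is a single monotonicity lemma: whenever $\tau_{cmb}$ is built from $\tau_{\alpha}$ and $\tau_{\beta}$ through (\ref{eq:cmb1})--(\ref{eq:cmb3}), every request interval of $\tau_{cmb}$ is bounded above by the request interval $T_{\alpha}(n)$ of the higher-priority task that generated it. Thus the maximum request interval can only shrink as tasks are folded into $\tau_{cmb}$, and this is exactly what keeps Assumption \ref{assumption:1} intact at each step.

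First I would establish this lemma directly from the combination rules. On an interval $[t_{\alpha}(n),t_{\alpha}(n+1)]$ of length $T_{\alpha}(n)$ there are two modes. When $0\le z(n)<T_{\alpha}(n)$, equations (\ref{eq:cmb1}) and (\ref{eq:cmb2}) split the interval into two instances of $\tau_{cmb}$ with request intervals $\max\{C_{\alpha}(n),z(n)\}$ and $T_{\alpha}(n)-\max\{C_{\alpha}(n),z(n)\}$; these are nonnegative and sum to $T_{\alpha}(n)$, so each is at most $T_{\alpha}(n)$. When $z(n)\ge T_{\alpha}(n)$, equation (\ref{eq:cmb3}) produces a single instance of request interval exactly $T_{\alpha}(n)$. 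Taking the maximum over $n$ then shows that the maximum request interval of $\tau_{cmb}$ never exceeds the maximum request interval of $\tau_{\alpha}$.

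With the lemma in hand the induction is short. For the base case $j=1$ we have $\tau_{\alpha}=\tau_1$ and $\tau_{\beta}=\tau_2$, so Assumption \ref{assumption:1} holds verbatim by the standing hypothesis on $\tau_1,\dots,\tau_N$. For the step, in iteration $j$ the task $\tau_{\alpha}$ is the $\tau_{cmb}$ produced at iteration $j-1$; applying the lemma repeatedly down the chain of iterations bounds its maximum request interval by the maximum request interval of $\tau_1$. Chaining the original inequalities of Assumption \ref{assumption:1},
\begin{equation*}
\max T_1 < \min T_2 \le \max T_2 < \cdots < \min T_{j+1},
\end{equation*}
yields $\max T_1<\min T_{j+1}$, so the maximum request interval of $\tau_{\alpha}$ is strictly below the minimum request interval of $\tau_{\beta}=\tau_{j+1}$, which is precisely Assumption \ref{assumption:1} for iteration $j$. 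The remaining ``for any $t$ there exist $t_0$ and $L$'' clause, i.e. the content of Assumption \ref{assumption:2}, is inherited: the windows supplied by Assumption \ref{assumption:2} for the original tasks tile the time axis without gaps, and inside each such window the folded tasks are well defined and obey the inequality just derived.

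I expect the main obstacle to be bookkeeping rather than analysis. The one point that must be handled with care is that the bound in the lemma telescopes correctly all the way back to $\max T_1$ (rather than stopping at some intermediate combined task, whose minimum request interval we do not directly control), and that schedulability---assumed for $\tau_1,\dots,\tau_N$---guarantees $C_{\alpha}(n)<T_{\alpha}(n)$ so that the two split intervals in the first mode are genuinely positive and the lemma is vacuity-free. Once the monotonicity lemma is stated cleanly, the Proposition reduces to that lemma together with the transitive chaining of the original assumption.
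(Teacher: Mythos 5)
Your proposal is correct and takes essentially the same route as the paper: induction over the iterations, anchored on the same key monotonicity observation that the combination rules (\ref{eq:cmb1})--(\ref{eq:cmb3}) never produce a request interval of $\tau_{cmb}$ exceeding the generating interval $T_{\alpha}(n)$, so the maximum request interval can only shrink, after which Assumption \ref{assumption:1} is chained to reach $\tau_{j+1}$. The only differences are cosmetic: you telescope the bound all the way back to $\max T_1$ and chain the original inequalities, whereas the paper chains through the induction hypothesis directly, and you justify the monotonicity lemma explicitly from the two waveform modes, which the paper merely asserts.
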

\begin{proof}
In the first iteration, $\tau_{\alpha}$ denotes $\tau_{1}$ and
$\tau_{\beta}$ denotes $\tau_{2}$. According to Assumption
\ref{assumption:1} and Assumption \ref{assumption:2}, the Proposition \ref{proposition:1} holds.

Suppose in the $n$-th iteration, for any time instant $t$, there exists $t_0$ and $L$ such that the maximum request interval of
$\tau_{\alpha}$ is smaller than the minimum request interval of
$\tau_{\beta}$ within $[t_0, t_0+L]$. $\tau_{\beta}$ denotes $\tau_{n+1}$. Since $\tau_{cmb}$ is
derived from each request interval of $\tau_{\alpha}$, the maximum
request interval of $\tau_{cmb}$ is no larger than the maximum
request interval of $\tau_{\alpha}$. According to
Assumption \ref{assumption:1}, we know that the maximum
request interval of $\tau_{n+1}$ is smaller than the minimum
request interval of $\tau_{n+2}$ within $[t_0, t_0+L]$. Therefore, the maximum
request interval of $\tau_{cmb}$ is smaller than the minimum
request interval of $\tau_{n+2}$ within $[t_0, t_0+L]$ and Proposition \ref{proposition:1} in the
$\{n+1\}$-th iteration holds.
\end{proof}

\section{\label{simulation}Simulations}
In this section, we verify the dynamic scheduling model by comparing
task scheduling waveforms with those generated by TrueTime. Then, we
simulate the scheduling and the battery models together to predict remaining
battery capacity.

\subsection{Scheduling model verification}
Consider the following scenario:
\begin{enumerate}
\item Initially, two periodic tasks $\tau_1$ and $\tau_2$ are running on a single processor,
with the following parameters: $C_1=0.2 {\rm min}$, $T_1=1 {\rm min}$
$C_2=0.3 {\rm min}$, $T_1=1.5 {\rm min}$. $\tau_1$ starts at the time $0$ while $\tau_2$ starts at the time $0.3{\rm min}$; \item An aperiodic task $\tau_3$ arrives at $50{\rm min}$ and another aperiodic task $\tau_4$ arrives at $50.6{\rm min}$. Both $\tau_3$ and $\tau_4$ will stop after $57.1{\rm min}$. Within $[50.6, 57.1]$, $\tau_3$ and $\tau_4$ have the following characteristics
$T_{3}(1)=1.6{\rm min}$, $C_{3}(1)=0.5{\rm min}$, $T_{3}(2)=2{\rm min}$, $C_{3}(2)=0.6{\rm min}$, $T_{3}(3)=1.7{\rm min}$, $C_{3}(3)=0.2{\rm min}$, $T_{3}(4)=1.8{\rm min}$, $C_{3}(4)=0.4{\rm min}$,
$T_{4}(1)=2.5{\rm min}$, $C_{4}(1)=0.1{\rm min}$, $T_{4}(2)=3{\rm min}$, $C_{4}(2)=0.4{\rm min}$, $T_{4}(3)=1{\rm min}$, $C_{4}(3)=0.3{\rm min}$.
\item At time $110{\rm min}$, $\tau_{2}$ stops. Another periodic task $\tau_5$ arrives at $111.3{\rm min}$, with $T_5=1.2{\rm min}$ and $C_5=0.4{\rm min}$.
\item A aperiodic task $\tau_6$ arrives at time $113{\rm min}$ and disappears after $120{\rm min}$. Within $[113, 120]$, $\tau_6$ has the following characteristics $T_{6}(1)=4{\rm min}$, $T_{6}(2)=3{\rm min}$, $C_{6}(1)=0.6{\rm min}$, $C_{6}(2)=0.3{\rm min}$.
\end{enumerate}

We are interested in the state of processor within $[50, 57.1]$ and $[110,120]$. The waveforms shown in Fig.\ref{fig:scheduling1} and Fig.\ref{fig:scheduling2} are consistent with those generated by Truetime. However, by using truetime, we have to initialize system whenever new tasks arrive and to simulate from the beginning.
\begin{figure}[htbp]
\centering
\epsfig{file=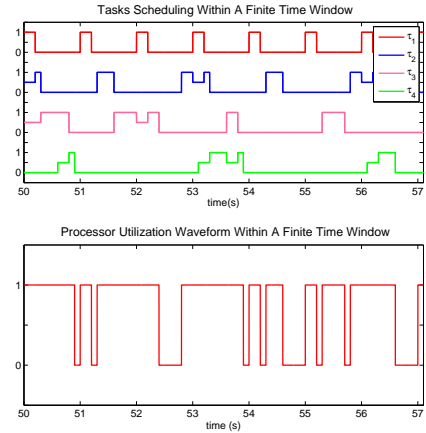, width=0.30\textwidth}
\caption{The State of Processor within $[50, 57.1]$} \label{fig:scheduling1}
\end{figure}

\begin{figure}[htbp]
\centering
\epsfig{file=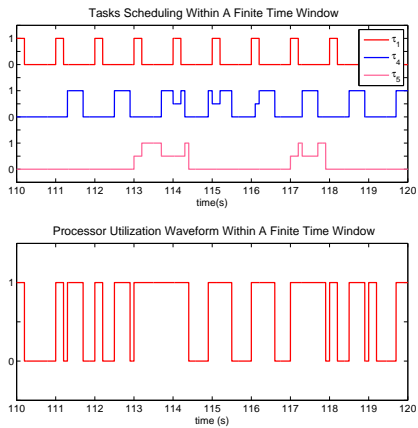, width=0.30\textwidth}
\caption{The State of Processor within $[110, 120]$} \label{fig:scheduling2}
\end{figure}

As we can see in Fig.\ref{fig:scheduling1} and
\ref{fig:scheduling2}, the value of a task will jump from $0$ to
$1$ when it begins to execute. Then, the value of this task might go
through several transitions between $1$ and $0.5$ during the
execution period. Finally, the value of this task will fall back to
$0$ once the execution is finished.

\subsection{SPSB simulation}
Fig. \ref{fig:battery1} shows the variation of  battery capacity
depending on processor status. When processor is busy, i.e.
$\Phi(t)=1$, the battery capacity loss keeps increasing. When
processor is free, i.e. $\Phi(t)=0$, the battery capacity loss
begins to decrease due to the recovery effect.

We simulate a battery same as that in \cite{Rakhmatov03}, which has the parameters $\alpha=40375$ and $
\beta=0.273$.  We assume that the current drawn by the processor when it is busy is
$I=200$ mA and the current vanishes when the processor is free.

Fig. \ref{fig:battery1} shows the state of battery within two time windows $[50, 57.1]$ and $[110, 120]$. According to Assumption \ref{assumption:1} and \ref{assumption:2}, we know that the state of the battery within its whole life period can be monitored by properly shifting the finite time window.

\begin{figure}[htbp]
\centering
\epsfig{file=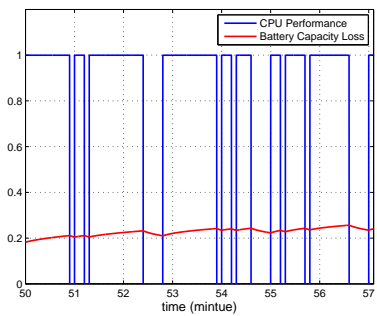, width=0.27\textwidth}
\caption{Simulating the hybrid model within [50 57.1]} \label{fig:battery1}
\end{figure}

\begin{figure}[htbp]
\centering
\epsfig{file=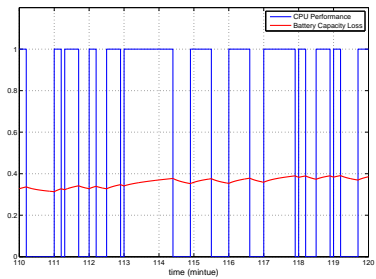, width=0.27\textwidth}
\caption{Simulating the hybrid model within [110 120]} \label{fig:battery2}
\end{figure}

\section{\label{con}Conclusions}
In this paper, we
have established analytical models for the behaviors of multiple aperiodic tasks scheduled on a single
processor supported by a single battery. We assume that the tasks are scheduled under a RMS like algorithm that
assigns priority of tasks based on information within a finite time window.   Our model is presented using a set of nonlinear
difference equations that describe the task behaviors together with continuous differential equations that describe battery discharge behavior.
One may find that these equations can be studied as a hybrid system. It can be perceived that through our modeling effort, battery behavior and
scheduled task behaviors can now be studied jointly within the same theoretical framework. This fact is well aligned with
the goals of cyber physical systems theory.


\end{document}